\numberwithin{equation}{section}
\newtheorem{thm}{Theorem}[section]
\newtheorem{prop}[thm]{Proposition}
\newtheorem{lem}[thm]{Lemma}
\newtheorem{de}[thm]{Definition}
\newtheorem{rem}[thm]{Remark}
\newtheorem{ex}[thm]{Example}
\newcommand{\eqa}{\begin{eqnarray}}
\newcommand{\eeqa}{\end{eqnarray}}
\newcommand{\beq}{\begin{equation}}
\newcommand{\eeq}{\end{equation}}
\begin{document}
\title[]{Generalized Legendre transformations and symmetries of the WDVV equations}
\author[]{
{Ian A.B. Strachan \ \  Richard Stedman}}

\address{School of Mathematics and Statistics, University of Glasgow, Glasgow G12 8QW, United Kingdom}

\email{\mbox{ian.strachan@glasgow.ac.uk, \quad richard.stedman@gmail.com}}
\date{}

\subjclass[2010]{Primary 53D45; Secondary 37K05}

\keywords{}
\dedicatory{}
\date{\today}

\begin{abstract}
The Witten-Dijkgraaf-Verlinde-Verlinde (or WDVV) equations, as one would expect from an integrable system, has many symmetries, both continuous and discrete. One class - the so-called Legendre transformations - were introduced by Dubrovin. They are a discrete set of symmetries between the stronger concept of a Frobenius manifold, and are generated by certain flat vector fields. In this paper this construction is generalized to the case where the vector field (called here the Legendre field) is non-flat but satisfies a certain set of defining equations. One application of this more general theory is to generate the induced symmetry between almost-dual Frobenius manifolds whose underlying Frobenius manifolds are related by a Legendre transformation. This also provides a map between rational and trigonometric solutions of the WDVV equations.
\end{abstract}
\maketitle

\tableofcontents
\newpage

\section{Introduction}

The Witten-Dijkgraaf-Verlinde-Verlinde (or WDVV) equations of associativity
\[
\frac{\partial^3F(t)}{\partial t^\alpha\partial t^\beta\partial t^\nu}\eta^{\nu\mu}
\frac{\partial^3F(t)}{\partial t^\mu\partial t^\gamma\partial t^\lambda}=
\frac{\partial^3F(t)}{\partial t^\beta\partial t^\gamma\partial t^\nu}\eta^{\nu\mu}
\frac{\partial^3F(t)}{\partial t^\alpha\partial t^\mu\partial t^\lambda},
\]
have been much studied from a variety of different points of view, amongst them, topological quantum field theories, Seiberg-Witten theory, singularity theory and integrable systems. Geometrically, a solution defines a multiplication $\circ\,:  T\mathcal{M} \times T\mathcal{M}\rightarrow  T\mathcal{M}$ of vector fields, i.e.
\[
\frac{\partial~}{\partial t^\alpha} \circ \frac{\partial~}{\partial t^\beta} = \eta^{\mu\nu}\frac{\partial^3 F}{\partial t^\alpha \partial t^\beta \partial t^\mu} \,\frac{\partial~}{\partial t^\nu}\,.
\]
The metric $\eta$, used to raise and lower indices, is flat and the coordinates $\{ t^\alpha \}$ are flat coordinates, i.e. the components of the metric is this coordinate system are constants.

 A symmetry of the WDVV equations are transformations
\begin{itemize}
\item[(i)] $t^\alpha \mapsto {\tilde t}^\alpha\,,$
\item[(ii)] $\eta_{\alpha\beta} \mapsto {\tilde\eta}_{\alpha\beta}\,;$
\item[(iii)] $F \mapsto {\tilde F}$
\end{itemize}
that preserve the WDVV equations, i.e. which map solutions of the WDVV equations to new solutions. In \cite{Du1} the following transformation - called a transformation of Legendre type - was shown to be such a symmetry:
\begin{eqnarray*}
{\tilde t}_\alpha & = & \displaystyle{ \frac{\partial^2 F}{\partial t^\alpha \partial t^\kappa}}\,,\\
\displaystyle{ \frac{\partial^2 {\tilde F}}{\partial {\tilde t}^\alpha \partial {\tilde t}^\beta}} &=& \displaystyle{ \frac{\partial^2 {F}}{\partial {t}^\alpha \partial {t}^\beta}}\,,\\
{\tilde\eta}_{\alpha\beta} &= & \eta_{\alpha\beta}\,.
\end{eqnarray*}
Here $\kappa$ is a distinguished variable which labels the Legendre transformation. Moreover, this is more than just a map between solutions of the WDVV equations: it preserves the stronger notion of a Frobenius manifold.

This transformation may be reinterpreted more geometrically in terms of a  vector field
\[
\partial = \frac{\partial~}{\partial t^\kappa}\,, \qquad\qquad \kappa\in\{1\,,\ldots\,,{\rm dim}\mathcal{M} \}\,.
\]
With this the metric transformation may be written
\[
{\tilde\eta}(X,Y) = \eta(\partial\circ X,\partial\circ Y)
\]
for arbitrary vector fields $X$ and $Y\,.$ The field $\partial$ is distinguished by the fact that it is flat for the Levi-Civita connection of the metric $\eta\,,$ namely $\nabla_X\partial=0\,.$

The aim of this paper is to study more general Legendre-type transformations where the vector field $\partial$ is replaced by a field satisfying the condition
\begin{equation}\label{Legendre}
X\circ \nabla_Y \partial = Y \circ \nabla_X \partial\,,\qquad\qquad X\,,Y\in T\mathcal{M}\,.
\end{equation}
Such fields will be called Legendre fields. They have appeared in the literature before: in \cite{LPR} they are used as generators of sets of commuting systems of hydrodynamic type. Here we follow their role as a map between torsion free metric connections \cite{DS}. These fields are defined and studied in Section 2. With such fields more general transformations between solutions of WDVV equations (but not for the stronger notion of a transformation between Frobenius manifold) may be constructed, and this construction is given in Section 4, after a submanifold theory of Legendre fields is developed in Section 3. Finally, in Section 4, the theory is applied to the induced transformation between almost-dual Frobenius manifolds. In particular, it is shown how the twisted-Legendre transformation introduced in \cite{RS} fits into this more general picture.

Some of the results in Section 2 have appeared in \cite{DS} and also in \cite{LPR}. In the former, the emphasis was on Legendre transformation as maps between connections. While this is how they are initially defined, Theorem \ref{symmetry} gives a coordinate description of the transformation between flat coordinate systems, and is a direct generalization of the original transformation as defined in \cite{Du1}. Similarly, some of the results in Section 5 follows from the constructions in \cite{DS} as a special case, when an arbitrary eventual identity is replaced by a specific example - the Euler vector field. However in this special case one may simplify the proofs and identify those Legendre fields that are actually flat. The link with the work in \cite{LPR} is outlined in the final section.

\section{Legendre fields and their properties}

Consider the following structure: $ \{ \mathcal{M}, \eta,\nabla,\circ\}$, where $\mathcal{M}$ is a manifold and
\begin{itemize}
\item[(a)] $\eta$ is a metric;
\item[(b)] $\nabla$ a connection (not, at this stage, metric or torsion free);
\item[(c)] $\circ: T\mathcal{M} \times T\mathcal{M} \rightarrow T\mathcal{M}$ is a commutative, associative multiplication (so a completely symmetric $(2,1)$-tensor)
with a unit $e$ which is compatible with the metric, i.e.
\[
\eta(X \circ Y,Z) = \eta(X, Y\circ Z)\,;
\]
\item[(d)] $\nabla\circ$ is a completely symmetric $(3,1)$-tensor.
\end{itemize}
A Legendre field will map metrics and connections to new metric and connections which preserve certain basic properties.
\begin{de}
Let $\partial$ be an invertible vector field (so $\partial^{-1} \circ \partial = e$). Let
\begin{eqnarray*}
{\tilde\eta}(X,Y) & = & \eta( \partial\circ X,\partial\circ Y)\,,\\
{\tilde\nabla}_XY&=& \partial^{-1} \circ \nabla_X(\partial\circ Y)\,.
\end{eqnarray*}
\end{de}
\noindent The following follows from direct calculation.
\begin{lem}
For all vector fields $X\,,Y\,,Z \in T\mathcal{M}$
\begin{eqnarray*}
( {\tilde\nabla}_X\eta)(Y,Z) & = & \left( \nabla_X\eta\right)(\partial\circ X,\partial\circ Y)\,,\\
T^{\tilde\nabla}(X,Y) & = & T^\nabla(X,Y) + \partial^{-1} \circ \left\{ Y \circ \nabla_X \partial - X \circ \nabla_Y \partial\right\}\,,\\
R^{\tilde\nabla}(X,Y)(Z) & = & \partial^{-1} \circ R^\nabla(X,Y) (\partial\circ Z)\,,
\end{eqnarray*}
where $T^\nabla$ and $R^\nabla$ are the torsion and curvature tensors of the connection $\nabla\,.$
\end{lem}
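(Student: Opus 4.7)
The plan is to verify the three identities by direct computation from the definitions of $\tilde\eta$ and $\tilde\nabla$. The key observation I would exploit throughout is the absorption identity
\[
\partial\circ\tilde\nabla_XY \;=\; \nabla_X(\partial\circ Y),
\]
which follows at once from $\partial^{-1}\circ\partial = e$ and the definition of $\tilde\nabla$. In words, any factor of $\partial$ applied on the left to a $\tilde\nabla$-derivative simply combines with the $\partial^{-1}$ sitting inside $\tilde\nabla$.

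For the metric identity, taking the natural reading $(\tilde\nabla_X\tilde\eta)(Y,Z) = (\nabla_X\eta)(\partial\circ Y,\partial\circ Z)$, I would expand
\[
(\tilde\nabla_X\tilde\eta)(Y,Z) = X\bigl(\tilde\eta(Y,Z)\bigr) - \tilde\eta(\tilde\nabla_XY,Z) - \tilde\eta(Y,\tilde\nabla_XZ),
\]
rewrite each $\tilde\eta$ in terms of $\eta$ with the inserted $\partial\circ$ factors, and apply the absorption identity to the last two terms. The right-hand side then reads off immediately as $(\nabla_X\eta)(\partial\circ Y,\partial\circ Z)$.

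For the torsion identity I would invoke the Leibniz rule
\[
\nabla_X(\partial\circ Y) = (\nabla_X\circ)(\partial,Y) + (\nabla_X\partial)\circ Y + \partial\circ\nabla_XY
\]
inside $\tilde\nabla_XY - \tilde\nabla_YX$. The only non-trivial ingredient beyond bookkeeping — and the main obstacle — is showing that the contribution $(\nabla_X\circ)(\partial,Y) - (\nabla_Y\circ)(\partial,X)$ vanishes; this is precisely where hypothesis (d), the complete symmetry of $\nabla\circ$ in its covariant slots, is used, since combined with commutativity of $\circ$ it forces the two terms to coincide. Collecting the surviving terms and subtracting $[X,Y]$ yields $T^\nabla(X,Y)$ plus the asserted correction $\partial^{-1}\circ\{Y\circ\nabla_X\partial - X\circ\nabla_Y\partial\}$.

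Finally, for the curvature identity I would iterate the absorption identity: $\tilde\nabla_X\tilde\nabla_YZ = \partial^{-1}\circ\nabla_X(\partial\circ\tilde\nabla_YZ) = \partial^{-1}\circ\nabla_X\nabla_Y(\partial\circ Z)$, and likewise for the opposite ordering, while $\tilde\nabla_{[X,Y]}Z = \partial^{-1}\circ\nabla_{[X,Y]}(\partial\circ Z)$ directly from the definition. Assembling the three pieces produces $R^{\tilde\nabla}(X,Y)Z = \partial^{-1}\circ R^\nabla(X,Y)(\partial\circ Z)$ with no further input required, since unlike the torsion calculation no derivatives of $\circ$ or of $\partial$ enter.
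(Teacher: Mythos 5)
Your proposal is correct and follows essentially the same route as the paper: the lemma is proved by direct calculation from the definitions, with the absorption identity $\partial\circ\tilde\nabla_XY=\nabla_X(\partial\circ Y)$ doing the bookkeeping, and you correctly pinpoint that the only non-trivial input is the complete symmetry of $\nabla\circ$ (hypothesis (d)), which is precisely the point the paper singles out in its remark following the lemma. Your reading of the first identity as $(\tilde\nabla_X\tilde\eta)(Y,Z)=(\nabla_X\eta)(\partial\circ Y,\partial\circ Z)$ is the intended one (the printed statement contains typographical slips in the tilde and in the arguments), so no gap remains.
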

\noindent The torsion result uses the symmetry of $\nabla\circ$, i.e. $(\nabla_X\circ)(\partial,Y) = (\nabla_Y\circ)(\partial,X)\,.$ This result motivates the following:
\begin{de}
A Legendre field $\partial \in T\mathcal{M}$ is a solution to the equation
\[
X \circ \nabla_Y \partial = Y \circ \nabla_X \partial
\]
for all vectors $X\,,Y\in T\mathcal{M}\,.$
\end{de}
Thus a Legendre field will map torsion free connections to torsion free connections, metric connections to metric connections, and zero-curvature conditions to zero-curvature conditions.
The definition appears to be overdetermined, with more equations than unknowns. However if one sets $Y=e$ one obtains the manifestly determined system
\[
\nabla_X\partial = X \circ \nabla_e\partial\,.
\]
Conversely, a solution of this equation implies that $\partial$ is a Legendre field, so this provides and alternative definition of a Legendre field.

In what follows we will apply Legendre fields to map solutions of the WDVV equations to new solutions, but the ideas may be applied more generally to situations where one has torsion free connections and non-metric connections, or curvature, such as Riemannian $F$-manifolds \cite{AL,LPR}.

Before doing this we derive certain basic properties of such Legendre fields. From now on we assume that $\nabla$ is the torsion free metric connection for $\eta\,.$ Clearly sums and constant scalar-multiplies of Legendre fields are again Legendre fields. In addition, when suitably defined, inverses and composition of Legendre fields are again Legendre fields.

\begin{prop}
If $\partial$ is a Legendre field for $\nabla$ then $\partial^{-1}$ is a Legendre field for $\tilde\nabla\,.$
\end{prop}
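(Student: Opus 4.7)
The plan is to verify the defining symmetry $X\circ\tilde\nabla_Y\partial^{-1}=Y\circ\tilde\nabla_X\partial^{-1}$ directly, by exploiting a drastic simplification in the formula for $\tilde\nabla$ when evaluated on $\partial^{-1}$. Substituting $W=\partial^{-1}$ into $\tilde\nabla_X W=\partial^{-1}\circ\nabla_X(\partial\circ W)$ produces, because $\partial\circ\partial^{-1}=e$, the clean identity $\tilde\nabla_X\partial^{-1}=\partial^{-1}\circ\nabla_X e$. Using associativity and commutativity of $\circ$, the Legendre condition for $\partial^{-1}$ with respect to $\tilde\nabla$ thus reduces to the claim that the unit $e$ is itself a Legendre field for $\nabla$, namely $X\circ\nabla_Y e=Y\circ\nabla_X e$ for all $X,Y\in T\mathcal{M}$.

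For this reduced claim the key input is condition (d). Applying $\nabla_Y$ to the tautology $X\circ e=X$ and expanding via the Leibniz rule for $\nabla$ and $\circ$ gives $X\circ\nabla_Y e=-(\nabla_Y\circ)(X,e)$; the complete symmetry of the $(3,1)$-tensor $\nabla\circ$ then forces the right-hand side to be symmetric in $X$ and $Y$, and the required identity follows immediately.

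The main subtlety is therefore not in the calculation, which is essentially one line, but in the interpretation of the statement. For the phrase ``Legendre field for $\tilde\nabla$'' to have content, one needs the quadruple $\{\mathcal{M},\tilde\eta,\tilde\nabla,\circ\}$ to itself satisfy conditions (a)--(d) of Section~2. This is precisely what the hypothesis on $\partial$ supplies via the preceding lemma: when $\partial$ is Legendre for $\nabla$ the torsion and non-metricity corrections both vanish, while the compatibility of $\circ$ with $\tilde\eta$ is immediate from its compatibility with $\eta$. Thus although the Legendre equation for $\partial^{-1}$ collapses to a fact about $\nabla$ and $e$ alone, the hypothesis that $\partial$ is a Legendre field for $\nabla$ is quietly guaranteeing that $\tilde\nabla$ lives in the right sort of ambient structure for the conclusion to carry any weight.
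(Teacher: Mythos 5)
Your proof is correct and follows essentially the same route as the paper: both reduce the claim to the identity $\tilde\nabla_X\partial^{-1}=\partial^{-1}\circ\nabla_X e$ and then use the complete symmetry of $\nabla\circ$ to show that the unit $e$ is itself a Legendre field for $\nabla$, which gives the required symmetry in $X$ and $Y$. The only difference is presentational (the paper extracts $\nabla_X e=X\circ\nabla_e e$ from $(\nabla_X\circ)(e,e)=(\nabla_e\circ)(X,e)$, while you differentiate $X\circ e=X$), and your closing remark about the ambient structure is a reasonable gloss rather than a needed step.
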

\begin{proof}
Expanding the symmetry condition $(\nabla_X\circ)(e,e)=(\nabla_e\circ)(X,e)$ yields
\[
\nabla_Xe = X\circ \nabla_e e
\]
and hence $e$ is a Legendre field (without the need for a condition such as $\nabla e=0$).
With this,
\begin{eqnarray*}
X\circ {\tilde \nabla}_Y \partial^{-1} & = & X \circ \partial^{-1} \circ \nabla_Y(\partial\circ\partial^{-1})\,,\\
& = & \partial^{-1}\circ X \circ \nabla_Y e\,.
\end{eqnarray*}
But since $e$ is a Legendre field the right-hand expression is symmetric in $X$ and $Y\,.$ Hence the result.
\end{proof}

\noindent Note that $\tilde\nabla_X \partial^{-1} = \partial^{-1} \circ \nabla_X e$ so even if $\partial$ is flat for $\nabla$ then it does not follow that $\partial^{-1}$ is flat for $\tilde\nabla\,,$ a further condition is required, i.e. $\nabla e=0\,.$

\begin{prop}
Consider connections $\nabla^{(i)}\,,i=1,2,3$ with interconnecting Legendre fields $\partial^i_j\,,$
\[
\nabla^{(1)} \xrightarrow{~~~\partial^2_1~~~} \nabla^{(2)}\xrightarrow{~~~\partial^3_2~~~} \nabla^{(3)}\,.
\]
Then
\[
\nabla^{(1)} \xrightarrow{~~~\partial^3_1~~~} \nabla^{(3)}
\]
where $\partial^3_1 = \partial^2_1 \circ \partial^3_2\,.$
\end{prop}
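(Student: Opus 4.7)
The statement carries two assertions: that $\partial^3_1 := \partial^2_1\circ \partial^3_2$ satisfies the Legendre condition with respect to $\nabla^{(1)}$, and that the connection it induces is $\nabla^{(3)}$. I would dispose of the second first, as it is purely formal. Chaining the two defining relations,
\[
\nabla^{(3)}_X Y \;=\; (\partial^3_2)^{-1}\circ \nabla^{(2)}_X(\partial^3_2\circ Y) \;=\; (\partial^3_2)^{-1}\circ (\partial^2_1)^{-1}\circ \nabla^{(1)}_X(\partial^2_1\circ\partial^3_2\circ Y),
\]
and commutativity plus associativity of $\circ$ identify this with $(\partial^3_1)^{-1}\circ \nabla^{(1)}_X(\partial^3_1\circ Y)$, so the induced-connection construction composes.

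For the Legendre condition itself, the useful move is to rewrite the definition $\tilde\nabla_Y Z = \partial^{-1}\circ \nabla_Y(\partial\circ Z)$ as the product-rule-free identity $\nabla_Y(\partial\circ Z) = \partial\circ \tilde\nabla_Y Z$. Applied at the first stage with $\partial = \partial^2_1$ and $Z = \partial^3_2$ it yields
\[
\nabla^{(1)}_Y(\partial^2_1\circ\partial^3_2) \;=\; \partial^2_1\circ \nabla^{(2)}_Y \partial^3_2.
\]
Composing on the left with an arbitrary $X$ and using associativity of $\circ$,
\[
X\circ \nabla^{(1)}_Y \partial^3_1 \;=\; \partial^2_1\circ\bigl(X\circ \nabla^{(2)}_Y \partial^3_2\bigr).
\]
The Legendre condition for $\partial^3_2$ with respect to $\nabla^{(2)}$ reads $X\circ \nabla^{(2)}_Y\partial^3_2 = Y\circ \nabla^{(2)}_X\partial^3_2$, so the right-hand side is symmetric in $X$ and $Y$. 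Reversing the chain of identities then gives $X\circ \nabla^{(1)}_Y \partial^3_1 = Y\circ \nabla^{(1)}_X \partial^3_1$, which is the Legendre condition for $\partial^3_1$ with respect to $\nabla^{(1)}$.

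\textbf{Main obstacle.} The obvious alternative is to expand $\nabla^{(1)}_Y(\partial^2_1\circ\partial^3_2)$ by the Leibniz rule, producing three pieces: $(\nabla^{(1)}_Y\circ)(\partial^2_1,\partial^3_2)$, $(\nabla^{(1)}_Y\partial^2_1)\circ\partial^3_2$, and $\partial^2_1\circ \nabla^{(1)}_Y\partial^3_2$. Symmetrising in $X,Y$ would then require the symmetry of $\nabla\circ$ for the first piece, the Legendre condition for $\partial^2_1$ for the second, and a separate identification relating $\nabla^{(1)}\partial^3_2$ to $\nabla^{(2)}\partial^3_2$ for the third. The rewritten identity $\nabla_Y(\partial\circ Z)=\partial\circ\tilde\nabla_Y Z$ bundles all three contributions into one expression, so the only real step is spotting this reformulation at the outset.
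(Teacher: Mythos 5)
Your proposal is correct and follows essentially the same route as the paper: the composed-connection formula is obtained by chaining the two defining relations, and the Legendre condition for $\partial^3_1$ is verified via the identity $\nabla^{(1)}_Y(\partial^2_1\circ\partial^3_2)=\partial^2_1\circ\nabla^{(2)}_Y\partial^3_2$ (the rewritten definition of $\nabla^{(2)}$) together with the symmetry of $X\circ\nabla^{(2)}_Y\partial^3_2$, which is exactly the chain of equalities in the paper's proof. No gaps.
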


\begin{proof}
Since
\begin{eqnarray*}
\nabla^{(2)}_XY & = & (\partial^2_1)^{-1} \circ \nabla^{(1)}_X (\partial^2_1\circ Y) \quad\quad{\rm where~} X\circ\nabla^{(1)}_Y \partial^2_1 =Y\circ\nabla^{(1)}_X \partial^2_1 \,,\\
\nabla^{(3)}_XY & = & (\partial^3_2)^{-1} \circ \nabla^{(2)}_X (\partial^3_2\circ Y) \quad\quad{\rm where~} X\circ\nabla^{(2)}_Y \partial^3_2 =Y\circ\nabla^{(2)}_X \partial^3_2 \,,
\end{eqnarray*}
it follows that
\[
\nabla_X^{(3)}Y=(\partial^2_1\circ\partial^3_2)^{-1} \circ \nabla_X^{(1)} \left[ (\partial^2_1\circ\partial^3_2)\circ Y\right]
\]
and that
\begin{eqnarray*}
X\circ\nabla^{(1)}_Y ( \partial^2_1\circ \partial^3_2) & = & X \circ \partial^2_1 \circ \nabla^{(2)}_Y \partial^3_2\,,\\
&=& Y \circ \partial^2_1 \circ \nabla^{(2)}_X \partial^3_2\,,\\
&=& Y\circ\nabla^{(1)}_X(\partial^2_1\circ \partial^3_2)\,.
\end{eqnarray*}
Hence $\partial^3_1=\partial^2_1\circ\partial^3_2$ is a Legendre field between the connections $\nabla^{(1)}$ and $\nabla^{(3)}\,.$
\end{proof}
New Legendre fields may also be constructed from old.
\begin{prop}
Let $\partial$ be a Legendre field. Then:
\begin{itemize}
\item[(a)] if $\nabla e=0$ and $R^\nabla(X,e)=0$ then $\nabla_e\partial$ is also a Legendre field;
\item[(b)] the solution $\partial_{\rm new}$ of the equation
\[
\nabla_X \partial_{\rm new} = X \circ \partial
\]
is a Legendre field;
\end{itemize}
\end{prop}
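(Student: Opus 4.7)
The plan is to exploit the equivalent characterization of a Legendre field already noted in the excerpt, namely that $\partial$ is Legendre if and only if $\nabla_X \partial = X \circ \nabla_e \partial$ for all vector fields $X$. Both parts then reduce to checking this single identity for the proposed new field. Part (b) is essentially immediate: setting $X = e$ in the defining equation $\nabla_X \partial_{\rm new} = X \circ \partial$ gives $\nabla_e \partial_{\rm new} = e \circ \partial = \partial$, so the right-hand side of the defining equation may be rewritten as $X \circ \partial = X \circ \nabla_e \partial_{\rm new}$, which is precisely the Legendre condition for $\partial_{\rm new}$.

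For (a) the target identity is $\nabla_X(\nabla_e \partial) = X \circ \nabla_e(\nabla_e \partial)$. I would begin by commuting the covariant derivatives via the Ricci identity
\[
\nabla_X \nabla_e \partial = \nabla_e \nabla_X \partial + R^\nabla(X,e)\partial + \nabla_{[X,e]}\partial,
\]
which collapses to $\nabla_X \nabla_e \partial = \nabla_e \nabla_X \partial + \nabla_{[X,e]}\partial$ by the curvature hypothesis. Next I would apply the Legendre condition $\nabla_X \partial = X \circ \nabla_e \partial$ and expand $\nabla_e$ using the Leibniz rule for the $(2,1)$-tensor $\circ$, obtaining
\[
\nabla_e \nabla_X \partial = (\nabla_e X) \circ \nabla_e \partial + X \circ \nabla_e \nabla_e \partial + (\nabla_e \circ)(X, \nabla_e \partial).
\]
Torsion-freeness together with $\nabla e = 0$ yields $[X,e] = -\nabla_e X$, and a second use of the Legendre condition gives $\nabla_{[X,e]}\partial = -(\nabla_e X) \circ \nabla_e \partial$. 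This cancels the first term in the expansion, leaving $\nabla_X \nabla_e \partial = X \circ \nabla_e \nabla_e \partial + (\nabla_e \circ)(X, \nabla_e \partial)$. Finally, the full symmetry of $\nabla \circ$ lets me rewrite the remaining summand as $(\nabla_X \circ)(e, \nabla_e \partial)$, which equals $\nabla_X(e \circ \nabla_e \partial) - (\nabla_X e) \circ \nabla_e \partial - e \circ \nabla_X(\nabla_e \partial) = 0$ by the unit axiom and $\nabla e = 0$.

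The main obstacle is the bookkeeping in (a): coordinating the three inputs --- the vanishing of $R^\nabla(X,e)$, the parallelism $\nabla e = 0$, and the full symmetry of $\nabla \circ$ --- so that the product-rule and bracket contributions cancel exactly, and then recognising the leftover piece as a contraction of $\nabla \circ$ that vanishes because $e$ is simultaneously a $\circ$-unit and $\nabla$-parallel. Each of the three hypotheses plays a distinct role in this cancellation, and removing any one of them breaks the argument. By contrast (b) requires essentially no computation beyond evaluating the defining equation at $X = e$, although one should keep in mind that this item implicitly presupposes that $\partial_{\rm new}$ exists as a solution of $\nabla_X \partial_{\rm new} = X \circ \partial$.
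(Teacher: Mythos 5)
Your proposal is correct and follows essentially the paper's route: your computation for (a) — Ricci identity, the Legendre condition in its determined form $\nabla_X\partial = X\circ\nabla_e\partial$, torsion-freeness with $\nabla e=0$, and the total symmetry of $\nabla\circ$ — is precisely the verification of the identity $\nabla_X(\nabla_e\partial)-X\circ\nabla_e(\nabla_e\partial)=R^\nabla(X,e)\partial$ that the paper invokes as ``easily proved.'' Part (b) is the same one-line observation as in the paper, stated via the equivalent condition $\nabla_X\partial_{\rm new}=X\circ\nabla_e\partial_{\rm new}$ rather than the symmetry of $Y\circ\nabla_X\partial_{\rm new}=X\circ Y\circ\partial$.
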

\begin{proof}
\begin{itemize}
\item[(a)] From the easily proved identity
\begin{eqnarray*}
\nabla_X(\nabla_e\partial) - X \circ \nabla_e(\nabla_e\partial) & = & R(X,e)\partial\,,\\
& = & 0 \,
\end{eqnarray*}
it follows that $X \circ \nabla_Y \left( \nabla_e \partial \right)$ is symmetric in $X$ and $Y\,.$
\item[(b)] Since $Y \circ \nabla_X \partial_{\rm new} = X\circ Y \circ \partial$ is symmetric in $X$ and $Y$ the result follows.
Note, this does not use the property that $\partial$ is a Legendre field. However, from part (a) with $X=e$ it follows that $\nabla_e \partial_{\rm new} = \partial$ so if the conditions in part (a) hold the vector field $\partial$ must be Legendre.
\end{itemize}

\end{proof}

\subsection{Homogeneous structures}

If the original multiplication is homogeneous, i.e. there exists an Euler vector $E$ such that $\mathcal{L}_E\circ=\circ\,,$ with the metric $\eta$ satisfying the homogeneity condition $\mathcal{L}_E\eta= (2-d)\, \eta\,,$ then such conditions are preserved if the Legendre field is similarly homogeneous. The following result is straightforward and will be given without proof.

\begin{lem} Let $\mathcal{L}_E\circ=\circ\,,$  and suppose
\begin{eqnarray*}
\mathcal{L}_E \eta & = & (2-d) \,\eta\,,\\
\mathcal{L}_E \partial & = & \mu \,\partial
\end{eqnarray*}
for constants $d$ and $\mu\,.$ Then
\[
\mathcal{L}_E {\tilde\eta} = (4-d + 2 \mu) \,{\tilde\eta}\,.
\]
\end{lem}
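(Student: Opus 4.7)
The plan is to compute $(\mathcal{L}_E\tilde\eta)(X,Y)$ directly from the definition $\tilde\eta(X,Y)=\eta(\partial\circ X,\partial\circ Y)$, using the fact that $\mathcal{L}_E$ is a derivation on the tensor algebra. The defining identity
$(\mathcal{L}_E\tilde\eta)(X,Y) = E\bigl(\tilde\eta(X,Y)\bigr) - \tilde\eta(\mathcal{L}_E X,Y) - \tilde\eta(X,\mathcal{L}_E Y)$
reduces the problem to (i) the hypothesis $\mathcal{L}_E\eta=(2-d)\,\eta$ and (ii) a formula for $\mathcal{L}_E(\partial\circ X) - \partial\circ\mathcal{L}_E X$.

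For step (ii) I would treat $\circ$ as a $(2,1)$-tensor and apply the Leibniz rule for $\mathcal{L}_E$, obtaining
$\mathcal{L}_E(\partial\circ X) \;=\; (\mathcal{L}_E\circ)(\partial,X) \;+\; (\mathcal{L}_E\partial)\circ X \;+\; \partial\circ \mathcal{L}_E X.$
Substituting $\mathcal{L}_E\circ=\circ$ and $\mathcal{L}_E\partial=\mu\,\partial$ then gives
$\mathcal{L}_E(\partial\circ X) - \partial\circ\mathcal{L}_E X \;=\; (1+\mu)\,\partial\circ X.$
It is precisely this extra ``$1$'' coming from $\mathcal{L}_E\circ=\circ$ that will produce the shift from $\mu$ to $1+\mu$ in the final weight.

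Plugging the resulting expression for $E\bigl(\eta(\partial\circ X,\partial\circ Y)\bigr)$ into the Leibniz expansion and using bilinearity of $\eta$, the two terms containing $\mathcal{L}_E X$ and $\mathcal{L}_E Y$ combine neatly with the $\tilde\eta(\mathcal{L}_E X,Y)$ and $\tilde\eta(X,\mathcal{L}_E Y)$ subtractions and cancel. What remains is
$(\mathcal{L}_E\tilde\eta)(X,Y) \;=\; (2-d)\,\eta(\partial\circ X,\partial\circ Y) \;+\; 2(1+\mu)\,\eta(\partial\circ X,\partial\circ Y) \;=\; (4-d+2\mu)\,\tilde\eta(X,Y),$
which is the claimed identity.

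There is no genuine obstacle; the argument is a one-line derivation calculus exercise once the tensorial Lie derivative of $\partial\circ X$ is handled correctly. The only pitfall worth flagging is forgetting the $(\mathcal{L}_E\circ)(\partial,X)$ contribution, which would yield the wrong weight $(2-d+2\mu)$ instead of $(4-d+2\mu)$.
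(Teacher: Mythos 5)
Your argument is correct: the Leibniz expansion of $\mathcal{L}_E$ applied to $\tilde\eta(X,Y)=\eta(\partial\circ X,\partial\circ Y)$, together with $\mathcal{L}_E(\partial\circ X)-\partial\circ\mathcal{L}_E X=(1+\mu)\,\partial\circ X$ coming from $\mathcal{L}_E\circ=\circ$ and $\mathcal{L}_E\partial=\mu\,\partial$, yields exactly the weight $(2-d)+2(1+\mu)=4-d+2\mu$. The paper states this lemma as ``straightforward'' and omits the proof, and your computation is precisely the standard derivation the authors had in mind, so there is nothing to add.
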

\noindent Similar results may be proved for the homogeneity of $\nabla_e\partial$ and related objects. Before applying these ideas to generate new solutions of the WDVV equations from old, we develop a submanifold theory for Legendre fields.

\section{A submanifold theory for Legendre fields}
Consider a submanifold $\mathcal{N} \subset \mathcal{M}$ with an orthogonal decomposition, with respect to the metric $\eta\,,$ of the tangent bundle $T\mathcal{M} = T\mathcal{N} \oplus T\mathcal{N}^\perp\,.$ We asume that $\mathcal{N}$ is a natural submanifold \cite{S}, i.e. that $T\mathcal{N}\circ T\mathcal{N} \subset T\mathcal{N}\,.$ In order to study the restriction of a Legendre field to a submanifold we first fix some notation.
\begin{eqnarray*}
\nabla_XY & = & {\overline\nabla}_XY + \alpha(X,Y)\,,\qquad\qquad X,Y \in T\mathcal{N}\,,\\
\nabla_x\xi & = & - A_\xi(X) + {\overline\nabla}^\perp_X\xi\,,\qquad\qquad \xi\in T\mathcal{N}^\perp\,,
\end{eqnarray*}
where the first terms on the right-hand side lie in $T\mathcal{N}$ and the second in $T\mathcal{N}^\perp\,.$ Standard ideas from submanifold theory imply that
$\alpha$ is symmetric and that $\eta(A_\xi(X),Y)=\eta(\alpha(X,Y),\xi)\,.$

A Legendre field $\partial$ on $\mathcal{M}$ when restricted to $\mathcal{N}$ decomposes as
\[
\partial = \partial_{||}+\partial_\perp\,,\qquad\qquad \partial_{||} \in T\mathcal{N}\,, \partial_\perp\in T\mathcal{N}^\perp\,,
\]
and on decomposing the Legendre condition (\ref{Legendre}) the component in $T\mathcal{N}$ yields the equation
\[
Y\circ {\overline\nabla}_X \partial_{||} - X\circ {\overline\nabla}_Y \partial_{||} = A_{\partial_\perp}(X)\circ Y - A_{\partial_\perp}(Y)\circ X\,,\qquad X,Y \in T\mathcal{N}\,.
\]
This uses the result $\xi\circ X \in T{\mathcal{N}}^\perp$ which follows from  the natural submanifold condition $0=\eta(X\circ Y,\xi)=\eta(Y,X\circ\xi)$\,.
Hence
\[
\eta\left(
Y\circ {\overline\nabla}_X \partial_{||} - X\circ {\overline\nabla}_Y \partial_{||} ,Z\right) =
\eta(\alpha(X,Y\circ Z) - \alpha(Y,X\circ Z), \partial_\perp)\,.
\]
Similarly decomposing the symmetry condition $(\nabla_X\circ)(Y,Z) = (\nabla_Y\circ)(X,Z)$ for $X\,,Y\,,Z\in T\mathcal{N}$ yields both the total symmetry of ${\overline\nabla}\circ$ and
\[
\alpha(X,Y\circ Z) - \alpha(Y,X\circ Z) = Y \circ \alpha(X,Z) - X \circ \alpha(Y,Z)\,.
\]
Thus
\[
\eta\left(
Y\circ {\overline\nabla}_X \partial_{||} - X\circ {\overline\nabla}_Y \partial_{||} ,Z\right) =
\eta\left(
Y, \alpha(X,Z)\circ \partial_\perp\right)-
\eta\left(
X, \alpha(Y,Z)\circ \partial_\perp\right)\,.
\]

Hence we arrive at the obstruction for a parallel component of a Legendre field on a natural submanifold to be Legendre.
\begin{lem}
The vector field $\partial_{||}$ is a Legendre field if and only the tensor
\[
\eta\left(
X, \alpha(Y,Z)\circ \partial_\perp\right)
\]
is totally symmetric.
\end{lem}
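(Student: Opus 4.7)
The plan is essentially to read off the equivalence from the identity derived immediately above the lemma statement. Writing $T(X,Y,Z) := \eta(X, \alpha(Y,Z)\circ \partial_\perp)$, the computation just carried out can be repackaged as
\[
\eta\left( Y\circ \overline{\nabla}_X \partial_{||} - X\circ \overline{\nabla}_Y \partial_{||},\, Z\right) \;=\; T(Y,X,Z) - T(X,Y,Z),
\]
valid for all $X,Y,Z\in T\mathcal{N}$. This single identity will carry the whole argument.

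First I would observe that the Legendre condition for $\partial_{||}$ on $(\mathcal{N}, \overline{\nabla}, \circ)$, namely $X\circ \overline{\nabla}_Y \partial_{||} = Y\circ \overline{\nabla}_X \partial_{||}$, is a vector equation in $T\mathcal{N}$ (the natural submanifold hypothesis $T\mathcal{N}\circ T\mathcal{N}\subset T\mathcal{N}$ keeps us inside the submanifold). Since the orthogonal decomposition $T\mathcal{M}=T\mathcal{N}\oplus T\mathcal{N}^\perp$ implies that $\eta$ restricts nondegenerately to $T\mathcal{N}$, this vector equation is equivalent to the vanishing of the left-hand side of the displayed identity for all $Z\in T\mathcal{N}$. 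Hence $\partial_{||}$ is Legendre if and only if $T(X,Y,Z)$ is symmetric under the exchange $X\leftrightarrow Y$.

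Next I would note that $T$ is already automatically symmetric in its last two slots: by the standard symmetry of the second fundamental form $\alpha(Y,Z)=\alpha(Z,Y)$, one has $T(X,Y,Z)=T(X,Z,Y)$. Combining this with the symmetry under $X\leftrightarrow Y$ derived in the previous step yields invariance under two transpositions that generate all of $S_3$, so $T$ is totally symmetric. Conversely, total symmetry trivially implies the $X\leftrightarrow Y$ symmetry, which in turn forces the Legendre condition via the identity above.

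There is no real obstacle: the only thing to be careful about is making explicit the two ingredients that have been implicitly used earlier, namely (i) that $\eta$ is nondegenerate on $T\mathcal{N}$ (so that an equation in $T\mathcal{N}$ is detected by pairing with all $Z\in T\mathcal{N}$), and (ii) that $\alpha$ is symmetric, which supplies the $Y\leftrightarrow Z$ symmetry of $T$ for free. With these in hand the lemma is immediate from the identity already established.
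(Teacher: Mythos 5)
Your proof is correct and follows essentially the same route as the paper: the lemma is exactly the repackaging of the identity derived just before its statement, with the Legendre condition for $\partial_{||}$ equivalent to symmetry of $\eta(X,\alpha(Y,Z)\circ\partial_\perp)$ in its first two arguments, and the symmetry of $\alpha$ supplying the remaining transposition. Your explicit mention of the nondegeneracy of $\eta$ on $T\mathcal{N}$ (implicit in the assumed orthogonal decomposition) and of the natural submanifold condition keeping everything in $T\mathcal{N}$ is a welcome clarification of points the paper leaves tacit.
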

Thus what controls the properties of the nascent induced Legendre field is the nature of $T\mathcal{N}^\perp \circ T\mathcal{N}^\perp\,.$ If
$T\mathcal{N}^\perp \circ T\mathcal{N}^\perp\subset T\mathcal{N}^\perp$ the above tensor is trivially totally symmetric - it vanishes identically. This happens in the case of a discriminant submanifold of a semi-simple Frobenius manifold \cite{S}.

\begin{ex}
Suppose the multiplication on $\mathcal{M}$ is semi-simple. Then the component $\partial_{||}$ of a Legendre field on a discriminant submanifold is a Legendre field for the induced structures on $\mathcal{N}\,.$
\end{ex}

\begin{ex}
From above, the unity field $e$ is a Legendre field. On decomposing this on a natural submanifold, $e=e_{||}+e_\perp$ gives
\[
\xi \circ (e_{||}+e_\perp) = \xi\,,\qquad\qquad \xi \in T\mathcal{N}^\perp\,.
\]
With this
\[
\eta(\xi \circ e_{||},X) + \eta(\xi\circ e_\perp,X) = \eta(\xi,X) =0\,.
\]
But $\eta(\xi \circ e_{||},X) = \eta( \xi, e_{||} \circ X) =0$ since $\mathcal{N}$ is a natural submanifold. Thus $\xi\circ e_\perp\in T\mathcal{N}^\perp$ for all $\xi\in T\mathcal{N}^\perp\,.$
Thus $\eta(\alpha(X,Z)\circ e_{\perp},Y)=0$ and hence $e_{||}$ is a Legendre field on the submanifold.
\end{ex}

\section{Symmetries of the WDVV equations}

The WDVV equations may be described geometrically by the condition that the deformed connection
\[
{}^{(\lambda)}\nabla_XY=\nabla_XY - \lambda X\circ Y
\]
has vanishing curvature \cite{Du1}. In particular $R^\nabla=0$ (the $\lambda^0$-term in $R^{ {}^{(\lambda)}\nabla }$) and hence there exists a coordinate system $\{t^\alpha\}$ in which the components $\eta(\frac{\partial~}{\partial t^\alpha},\frac{\partial~}{\partial^\beta})$ are constant. Given a Legendre field it follows from above that if ${}^{(\lambda)}{\tilde\nabla}_XY={\tilde\nabla}_XY - \lambda X\circ Y$ then
\[
    R^{ {}^{(\lambda)}{\tilde\nabla}} (X,Y)Z = \partial^{-1} \circ R^{   {}^{(\lambda)}  \nabla} (X,Y)(\partial\circ Z)
\]
and hence that the new deformed connection ${}^{(\lambda)}{\tilde\nabla}_XY$ is also flat and hence defines a new - Legendre transformed - solution of the WDVV equations. It is important to note that geometrically the multiplication does not change when such a transformation is applied. However, the coordinate expression for this tensor does depend on the choice of flat-coordinate system and these do change under such a transformation.

The following Theorem gives such coordinate dependent formulae for the new structures in terms of the old, and is a direct generalisation of the original Legendre transformation present in \cite{Du1}.

\begin{thm}\label{symmetry}
Let $\{t^\alpha\}$ and $\{ {\tilde t}^\alpha\}$ be the flat coordinates of the metrics $\eta$ and $\tilde\eta$ respectively. Then, given a solution of the WDVV equations with prepotential $F$ and a  Legendre field $\partial$, new flat coordinates, metric coefficients and prepotential are given by:
\begin{itemize}
\item[(a)] $\displaystyle{{\tilde\eta}_{\alpha\beta} = \eta_{\alpha\beta}}\,;$
\vskip 2mm
\item[(b)] $ \displaystyle{\frac{\partial {\tilde t}^\alpha }{\partial t^\beta} = \partial^\sigma c^\alpha_{\sigma\beta}}\,;$
\vskip 2mm
\item[(c)] ${\tilde F}_{\alpha\beta} = F_{\alpha\beta}\,.$
\end{itemize}
\end{thm}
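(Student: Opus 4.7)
The plan is to verify (b) first and then to deduce (a) and (c) from it, working throughout in the original flat coordinates $\{t^\alpha\}$. Since $\nabla$ is Levi-Civita for $\eta$, its Christoffel symbols vanish in $\{t^\alpha\}$, so the Legendre condition (\ref{Legendre}) applied to the coordinate fields $\partial/\partial t^\beta,\partial/\partial t^\gamma$ reduces to the component identity $c^\alpha_{\beta\sigma}\,\partial_\gamma\partial^\sigma = c^\alpha_{\gamma\sigma}\,\partial_\beta\partial^\sigma$. Formula (b) amounts to the statement that the one-forms $\omega^\alpha := \partial^\sigma c^\alpha_{\sigma\beta}\,dt^\beta$ are exact. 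The first step is to check closedness, i.e.\ $\partial_\gamma(\partial^\sigma c^\alpha_{\sigma\beta}) = \partial_\beta(\partial^\sigma c^\alpha_{\sigma\gamma})$. Applying Leibniz and separating, the pieces containing $\partial_\gamma c^\alpha_{\sigma\beta}$ and $\partial_\beta c^\alpha_{\sigma\gamma}$ both become $\eta^{\alpha\mu}\partial_\gamma\partial_\beta\partial_\sigma\partial_\mu F$, hence cancel by total symmetry of fourth partials of $F$; the pieces containing $\partial_\gamma\partial^\sigma$ and $\partial_\beta\partial^\sigma$ are exactly the two sides of the Legendre identity above. By the Poincar\'e lemma there exist local functions $\tilde t^\alpha$ with $\partial\tilde t^\alpha/\partial t^\beta = \partial^\sigma c^\alpha_{\sigma\beta}$. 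Since the Jacobian $J^\alpha_{\ \beta} := \partial^\sigma c^\alpha_{\sigma\beta}$ is the matrix of the invertible operator $X\mapsto \partial\circ X$, the map $t\mapsto\tilde t$ is a local diffeomorphism.

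For (a), set $\tilde\partial_\alpha := \partial/\partial\tilde t^\alpha$. Because $(J^{-1})^\beta_{\ \alpha}$ is the matrix of multiplication by $\partial^{-1}$, one has $\tilde\partial_\alpha = \partial^{-1}\circ\partial_\alpha$ and therefore $\partial\circ\tilde\partial_\alpha = \partial_\alpha$. Substituting into $\tilde\eta(X,Y)=\eta(\partial\circ X,\partial\circ Y)$ gives $\tilde\eta_{\alpha\beta} = \eta(\partial_\alpha,\partial_\beta) = \eta_{\alpha\beta}$, which proves (a) and simultaneously shows that $\{\tilde t^\alpha\}$ are flat coordinates for $\tilde\eta$. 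For (c), define $G_{\alpha\beta}(\tilde t) := F_{\alpha\beta}(t(\tilde t))$; the existence of $\tilde F$ with $\partial^2\tilde F/\partial\tilde t^\alpha\partial\tilde t^\beta = G_{\alpha\beta}$ is equivalent to $\partial G_{\alpha\beta}/\partial\tilde t^\gamma$ being totally symmetric in $\alpha,\beta,\gamma$. Using the chain rule and the Frobenius compatibility $\eta(X\circ Y,Z)=\eta(X,Y\circ Z)$,
\[
\frac{\partial G_{\alpha\beta}}{\partial\tilde t^\gamma} = [(c_\partial)^{-1}]^\rho_{\ \gamma}\,F_{\alpha\beta\rho} = \eta(\partial_\alpha\circ\partial_\beta,\partial^{-1}\circ\partial_\gamma) = \eta(\partial^{-1},\partial_\alpha\circ\partial_\beta\circ\partial_\gamma),
\]
which is manifestly symmetric in all three indices. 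Successive integration then yields $\tilde F$ and completes (c).

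The main obstacle is the closedness computation in the first step: it is the sole place where the Legendre condition is used in an essential way, and its success hinges on an exact split into the fourth-derivative term (symmetric by virtue of commuting partial derivatives of $F$) and the Legendre term. The rest of the argument is an orchestrated use of Frobenius compatibility. The fact that the $\tilde F$ so constructed actually satisfies the WDVV equations is not part of the theorem statement but is already recorded in the geometric remark preceding it, namely that the deformed connection ${}^{(\lambda)}{\tilde\nabla}$ inherits flatness under a Legendre transformation.
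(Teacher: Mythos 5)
Your proposal is correct, but it reaches the theorem by a different route from the paper. The paper takes the existence of flat coordinates for $\tilde\eta$ as already established --- it follows from the earlier lemma showing that $\tilde\nabla$ is the flat, torsion-free, metric connection of $\tilde\eta$ --- and starts from the frame relation $\frac{\partial~}{\partial t^\alpha}=\partial\circ\frac{\partial~}{\partial\tilde t^\alpha}$ (the $\tilde\nabla$-flat fields are precisely $\partial^{-1}\circ X$ with $X$ a $\nabla$-flat field), proving (a) first, then (b) by inverting the resulting Jacobian relation, and (c) by the same chain-rule/Frobenius computation you give. You reverse the order: you construct the new coordinates directly by checking that the one-forms $\partial^\sigma c^\alpha_{\sigma\beta}\,dt^\beta$ are closed --- this is where the Legendre condition enters your argument, whereas in the paper it enters only through the lemma that makes $\tilde\nabla$ torsion-free and flat --- and then recover $\tilde\partial_\alpha=\partial^{-1}\circ\partial_\alpha$ from the inverse Jacobian, from which (a) and (c) follow. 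What this buys is a self-contained, coordinate-level proof that the new flat coordinate system exists and satisfies (b), making explicit the work hidden in the paper's phrase ``up to an inessential linear transformation''; the cost is re-deriving in components what the geometric machinery already supplies. One small step you leave implicit: to know that the $\tilde F$ obtained by integrating $F_{\alpha\beta}$ is genuinely the prepotential of the transformed structure, you should record that $\eta(\partial_\alpha\circ\partial_\beta,\partial^{-1}\circ\partial_\gamma)=\tilde\eta(\tilde\partial_\alpha\circ\tilde\partial_\beta,\tilde\partial_\gamma)$, which is immediate from $\tilde\partial_\alpha=\partial^{-1}\circ\partial_\alpha$ and the definition of $\tilde\eta$; your final display contains this identity in all but name, and your appeal to the flatness of ${}^{(\lambda)}\tilde\nabla$ for the WDVV property of $\tilde F$ matches the remark preceding the theorem in the paper.
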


\begin{proof}

\begin{itemize}
\item[(a)] Since $\tilde\eta(X,Y) =\eta(\partial\circ X,\partial \circ Y)$ it follows that, up to an inessential linear transformation, that
\begin{equation}
\frac{\partial~}{\partial t^\alpha} = \partial \circ \frac{\partial~}{\partial {\tilde t}^\alpha}
\label{CoV}
\end{equation}
\noindent and hence that
\begin{eqnarray*}
{\tilde\eta}_{\alpha\beta} & = & {\tilde\eta}\left( \frac{\partial~}{\partial {\tilde t}^\alpha},\frac{\partial~}{\partial {\tilde t}^\beta}\right)\,,\\
&=&  {\eta}\left( \frac{\partial~}{\partial {t}^\alpha},\frac{\partial~}{\partial {t}^\beta}\right)\,,\\
&=& \eta_{\alpha\beta}\,.
\end{eqnarray*}

\item[(b)] From (\ref{CoV}) it follows that
\begin{eqnarray*}
\frac{\partial~}{\partial t^\alpha} & = & \frac{\partial t^\beta}{\partial {\tilde t}^\alpha} \left( \partial \circ \frac{\partial~}{\partial {t}^\beta}\right)\,,\\
&=&\frac{\partial t^\beta}{\partial {\tilde t}^\alpha} \, \partial^\sigma c_{\sigma\beta}^\mu \frac{\partial~}{\partial t^\mu}\,.\\
\end{eqnarray*}
Hence
\[
\frac{\partial t^\beta}{\partial {\tilde t}^\alpha} \, \partial^\sigma c_{\sigma\beta}^\mu = \delta^\mu_\alpha
\]
and (b) follows. On lowering an index and using the symmetry of $c_{\sigma\alpha\beta}$ it follows that
\[
\frac{\partial \tilde{t}_\alpha}{\partial t^\beta} = \frac{\partial \tilde{t}_\beta}{\partial t^\alpha}
\]
and hence $\tilde{t}_\alpha = \partial_{t^\alpha} h$ for some locally defined function $h\,.$

\item[(c)] To prove the final formula,
\begin{eqnarray*}
\frac{\partial {\tilde F}_{\alpha\beta}}{\partial {\tilde t}^\gamma} = {\tilde c} \left(
\frac{\partial~}{\partial {\tilde t}^\alpha},
\frac{\partial~}{\partial {\tilde t}^\beta},
\frac{\partial~}{\partial {\tilde t}^\gamma}\right) & = & {\tilde{\eta}}
\left(\frac{\partial~}{\partial {\tilde t}^\alpha}\circ\frac{\partial~}{\partial {\tilde t}^\beta},\frac{\partial~}{\partial {\tilde t}^\gamma}\right)\,,\\
&=&
\eta\left(
\frac{\partial~}{\partial {t}^\alpha}\circ\frac{\partial~}{\partial {t}^\beta},\frac{\partial~}{\partial {t}^\sigma}\right) \frac{\partial t^\sigma}{\partial {\tilde t}^\gamma}\,,\\
&=&
\frac{\partial t^\sigma}{\partial {\tilde t}^\gamma}
c\left(
\frac{\partial~}{\partial {t}^\alpha},
\frac{\partial~}{\partial {t}^\beta},
\frac{\partial~}{\partial {t}^\sigma}\right)
= \frac{\partial F_{\alpha\beta}}{\partial {\tilde t}^\gamma} \,.
\end{eqnarray*}
Hence, on integrating, up to an inessential constant, part (c) follows.
\end{itemize}
\end{proof}

The geometry of the deformed flat connection encodes a canonical class of Legendre fields. Expanding a flat section ${}^{(\lambda)}\nabla_X s=0$ as a power series
\[
s=\sum_{n=0}^\infty \lambda^n \partial_{(n)}
\]
and equating coefficients gives
\begin{eqnarray}
\nabla_X \partial_{(0)} & = & 0 \,, \label{zeroth} \\
\nabla_X \partial_{(n)} & = & X \circ \partial_{(n-1)}\,.\label{recursion}
\end{eqnarray}
Thus each of the fields $\partial_{(n)}$ are Legendre fields. Conversely, starting from a flat vector field $\partial_{(0)}$ one may recursively construction the flat section, with each $\partial_{(n)}$ being a Legendre field. If $\partial_{(0)} = \frac{\partial~}{\partial t^\kappa}$ for some $\kappa$ one obtains an infinite family of Legendre fields labeled by $(n,\kappa)\,.$

Note that, when written in coordinate form (\ref{recursion}) takes the form
\[
\frac{\partial~}{\partial t^\alpha} \partial^\beta_{(n,\kappa)} = c^\beta_{\alpha\sigma} \partial^\sigma_{(n-1,\kappa)}\,.
\]
Furthermore, it was shown in \cite{Du1} that the vector field may be written in terms of (scalar) Hamiltonian densities
\[
\partial_{(n,\kappa)} = \eta^{\alpha\beta} \frac{\partial h_{(n,\kappa)}}{\partial t^\alpha} {\frac{\partial~}{\partial t^\beta}}\,.
\]
With this the coordinate transformation takes a simple form:
\begin{eqnarray*}
\frac{\partial{\tilde t}^\alpha}{\partial t^\beta} & = & \partial^\sigma_{(n,\kappa)} c^\alpha_{\sigma\beta}\,,\\
&=& \frac{\partial~} {\partial t^\beta} \partial^\alpha_{(n+1,\kappa)}
\end{eqnarray*}
and hence ${\tilde t}^\alpha = \partial^\alpha_{(n+1,\kappa)}$ or ${\tilde t}_\alpha = \displaystyle{\frac{\partial h_{(n+1,\kappa)}}{\partial t^\alpha}}\,$ (on lowering an index with the metric $\eta$).

\begin{ex}\label{Gen Leg}
Consider the two-dimensional Frobenius manifold defined by the prepotential\footnote{In examples we lower indices for notational convenience.}
\[
F=\frac{1}{2}t_1^2 t_2+e^{t_2},
\]
and Euler vector field
\[
E=t_1\partial_{t_1}+2\partial_{t_2},
\]
On writing $\partial=a(t_1,t_2)\partial_{t_1}+b(t_1,t_2)\partial_{t_2}$ the Legendre field condition implies
\begin{eqnarray*}	
\partial_{t_1} a  & = & \partial_{t_2} b,\\
e^{t_2}\partial_{t_1} b &  = & \partial_{t_2} a.
\end{eqnarray*}
If we additionally impose homogeneity $\mathcal{L}_E\partial=\mu\partial$ we obtain
\begin{eqnarray*}
& \quad& t_1\frac{\partial a}{\partial t_1}+2\frac{\partial a}{\partial t_2}-=(\mu+1) a,\\
& \quad& t_1\frac{\partial b}{\partial t_1}+2\frac{\partial b}{\partial t_2}=\mu b.
\end{eqnarray*}
These imply that $a(t_1,t_2)=t_1^{\mu+1} A(z)$, $b(t_1,t_2)=t_1^\mu B(z)$ where $\displaystyle{z=t_1^{-2} e^{t_2}}$ and with these the Legendre field conditions (\ref{Legendre}) become the following ordinary differential equations for the functions $A$ and $B$:
\begin{eqnarray*}
(\mu+1)A(z)-2zA'(z) & = & zB'(z),\\
\mu B(z)-2zB'(z) & = & A'(z).
\end{eqnarray*}
On eliminating $B(z)$ and setting $w=4z$ we obtain the equation
\[
w(1-w)A''(w)+\frac{2\mu-1}{2}wA'(w)-\frac{\mu(\mu+1)}{4}A(w)=0\,.
\]
This is a hypergeometric equation
\[
w(1-w)A''(w)+[c-(a+b+1)w]A'(w)- a b A(w)=0,
\]
with $ a=-\frac{\mu}{2},b=-\frac{\mu+1}{2}$ and $c=0\,.$ Explicit solutions may hence be found using the well-known general theory of such hypergeometric equations (see \cite{Sted}).
\end{ex}

\section{Twisted Legendre transformation and almost-duality}

Given a Frobenius manifold with metric $\eta$ and multiplication $\circ$ one may construct an associated manifold - an almost-dual Frobenius manifold - with new metric and multiplication
\begin{eqnarray*}
g(X,Y) & = & \eta(E^{-1}\circ X,Y)\,,\\
X\star Y & = & E^{-1}\circ X\circ Y
\end{eqnarray*}
where $E$ is the Euler vector field and $E^{-1}\circ E=e\,.$ It was shown in \cite{Du1,Du2} that $g$ is flat and compatible with the new product and, moreover, that a dual solution of the WDVV equations (written in the flat coordinates of the metric $g$) may be constructed. While the metrics $\eta$ and $g$ are related by the simple formula above, the corresponding connections, $\nabla$ and $\nabla^\star$, are related by a more complicated formula (see, for example, \cite{H})
\[
\nabla^\star_XY = E\circ\nabla(E^{-1}\circ Y) - (\nabla_{E^{-1}}\circ Y)\circ X + \frac{1}{2} (3-d) E^{-1}\circ X\circ Y\,,
\]
where $\mathcal{L}g=(3-d)g$ for some constant $d\,.$ Since we have an almost-dual solution of the WDVV equations, we automatically have a flat pencil of connections
\[
{}^\lambda\nabla^\star_XY = \nabla^\star_XY - \lambda X\star Y
\]
and hence the above theory of generalized Legendre transformations may be applied directly to solutions by replacing $\eta$ with $g$ and $\nabla$ with $\nabla^\star$ and replacing the coordinates $\{ t^\alpha \}$ by the flat coordinates of the metric $g\,.$ However, there is a distinguished Legendre transformation between such almost-dual solutions.

We have, schematically\footnote{Here, and for the rest of this section, $(\eta\,,\nabla\,,\circ)$ will we used to denote a Frobenius manifold, not just a solution to the WDVV equations, and $(g\,,\nabla^\star\,,\star)$ the corresponding almost-dual Frobenius manifold. We suppress, for notational convenience, the full data.},
\[
\begin{array}{cccc}
&(\eta\,,\nabla\,,\circ)& \overset{\partial}{\longrightarrow}&{({\tilde\eta}\,,{\tilde\nabla}\,,\circ)}\\
%& & \\
\text{almost duality}&\downarrow & &\downarrow  \\
%& & \\
&(g\,,\nabla^\star\,,\star) && ({\tilde g}\,,{\tilde \nabla}^\star,\star)\,.
\end{array}
\]
Here $\partial$ is a flat Legendre field, i.e. $\nabla\partial=0\,.$ This ensures we have a map between Frobenius manifolds, rather than just between solutions of the WDVV equations, so the almost-duality transformation may be applied to both sides (the vertical arrows in the above diagram). The aim of this section, and in fact the original aim of this paper, is to construct a transformation to make the above diagram commute, i.e. to construct a Legendre field $\hat\partial$
\[
(g\,,\nabla^\star\,,\star) \overset{\hat{\partial}} {\longrightarrow}({\tilde g}\,,{\tilde \nabla}^\star,\star)\,.
\]
This field - called a twisted Legendre field - was first studied in \cite{RS}. However the full geometric properties were not fully described there.

\begin{thm}
The following diagram commutes:
\[
\begin{array}{ccc}
(\eta\,,\nabla\,,\circ)& \overset{\partial}{\longrightarrow}&{({\tilde\eta}\,,{\tilde\nabla}\,,\circ)}\\
%& & \\
\downarrow & &\downarrow  \\
%& & \\
(g\,,\nabla^\star\,,\star) & \overset{\hat{\partial}} {\longrightarrow}& ({\tilde g}\,,{\tilde \nabla}^\star,\star)\,,
\end{array}
\]
where ${\hat\partial} = E\circ\partial$ and $\nabla \partial=0\,.$ In particular:
\begin{itemize}
\item[(a)] ${\tilde g}(X,Y)= g ( {\hat\partial}\star X, {\hat\partial}\star Y)$\,;
\item[(b)] ${\tilde\nabla}^\star_XY = {\hat\partial}^{-1} \star \nabla_X^\star ( {\hat\partial}\star Y)$\,;
\item[(c)] $\hat\partial$ is a Legendre field for the connection $\nabla^\star\,.$
\end{itemize}

\end{thm}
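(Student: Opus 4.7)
The strategy is to establish (a), (c), and (b) in that order, deducing (b) from the other two by uniqueness of the Levi-Civita connection.

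For (a) the key identity is $\hat\partial\star X = E^{-1}\circ E\circ\partial\circ X = \partial\circ X$, and similarly $\hat\partial^{-1}\star X = \partial^{-1}\circ X$. Combining this with the defining formula $g(U,V)=\eta(E^{-1}\circ U,V)$ of the almost-dual metric gives
\[
g(\hat\partial\star X,\hat\partial\star Y)=\eta(E^{-1}\circ\partial\circ X,\partial\circ Y)=\tilde\eta(E^{-1}\circ X,Y)=\tilde g(X,Y),
\]
where the last equality uses that the multiplication $\circ$ and the Euler field $E$ are intrinsic geometric data on $\mathcal{M}$, unaltered by a Legendre transformation on the Frobenius side.

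For (c) I would expand $\nabla^\star_Y\hat\partial=\nabla^\star_Y(E\circ\partial)$ using the formula quoted in the excerpt that expresses $\nabla^\star$ in terms of $\nabla$, $E$, $\circ$ and $d$. The hypothesis $\nabla\partial=0$ kills the principal term $E\circ\nabla_Y\partial$, leaving only correction terms involving $\partial$, $E$, $Y$ and the Euler-derivative of $\circ$. Composing with $X$ via $\star$ (equivalently via $E^{-1}\circ\circ$) and then invoking the commutativity of $\circ$, the total symmetry of the $(3,1)$-tensor $\nabla\circ$ (condition (d) of Section 2), and $\nabla\partial=0$ once more, should collapse the result into an expression manifestly symmetric in $X$ and $Y$. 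That is exactly the Legendre condition $X\star\nabla^\star_Y\hat\partial=Y\star\nabla^\star_X\hat\partial$.

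Statement (b) then comes for free from (a) and (c). Indeed, (c) says that $\hat\partial$ is a bona fide Legendre field for the almost-dual connection $\nabla^\star$, so the machinery of Section 2 guarantees that
\[
\hat\nabla_XY:=\hat\partial^{-1}\star\nabla^\star_X(\hat\partial\star Y)
\]
is a torsion-free connection compatible with the transported metric $g(\hat\partial\star-,\hat\partial\star-)$. By (a), that metric coincides with $\tilde g$, so $\hat\nabla$ and $\tilde\nabla^\star$ are both the Levi-Civita connection of $\tilde g$ and must therefore agree, which is precisely (b).

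The main obstacle will be the coordinate-free computation in (c). The formula for $\nabla^\star$ in terms of $\nabla$ is not short, involves the conformal dimension $d$, and mixes $\circ$-multiplication by $E$ with covariant derivatives; the real work is to reorganise the resulting expression, using only $\nabla\partial=0$ and the symmetry of $\nabla\circ$, into one which is visibly $X\leftrightarrow Y$ symmetric after contraction with $X$ via $\star$.
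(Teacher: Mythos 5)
Your argument is essentially sound, but it reaches part (b) by a genuinely different route from the paper. Parts (a) and (c) follow the paper's path: (a) is the same unpacking of definitions (using $\hat\partial\star X=\partial\circ X$ and commutativity of $\circ$), and your sketch of (c) is realizable exactly as you describe -- expanding $\nabla^\star_X\hat\partial$ with the quoted formula and $\nabla\partial=0$ gives $\nabla^\star_X\hat\partial=\bigl\{-\nabla_\partial E+\tfrac12(3-d)\,\partial\bigr\}\circ X$, which is of the form $W\circ X$ with $W$ independent of $X$, so $Y\star\nabla^\star_X\hat\partial=E^{-1}\circ W\circ X\circ Y$ is automatically symmetric; note you do not even need the symmetry of $\nabla\circ$ here, only flatness of $\partial$ (the paper proves the sharper statement $\nabla^\star_X\hat\partial=(\mu_\kappa+\tfrac12)\,\partial\circ X$, which requires $\partial=\partial_{t^\kappa}$ so that $[E,\partial]=-d_\kappa\partial$). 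Where you diverge is (b): the paper computes ${\tilde\nabla}^\star_XY$ explicitly, invoking an identity from \cite{DS} for $\nabla_{E^{-1}\circ\partial\circ Y}E-\nabla_{E^{-1}\circ Y}(\partial\circ E)$ and the Legendre shift of the spectrum (${\tilde d}=-2\mu_\kappa$, $d=-2\mu_1$, $d_\kappa=1-(\mu_\kappa-\mu_1)$) to show the residual term $\{d_\kappa-1+\tfrac12(d-{\tilde d})\}E^{-1}\circ X\circ Y$ vanishes; you instead deduce (b) from uniqueness of the Levi-Civita connection, using (c) plus the Section 2 lemma to see that $\hat\partial^{-1}\star\nabla^\star_X(\hat\partial\star Y)$ is torsion free and compatible with $g(\hat\partial\star\,\cdot\,,\hat\partial\star\,\cdot\,)={\tilde g}$, while ${\tilde\nabla}^\star$ is the Levi-Civita connection of ${\tilde g}$ by almost duality applied to the Legendre-transformed Frobenius manifold (this is where $\nabla\partial=0$ is essential, so that the right-hand column is a genuine Frobenius manifold). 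Your route is shorter and avoids all the spectral bookkeeping, at the cost of two points you should make explicit: the torsion formula of Section 2 applied on the dual side needs $\nabla^\star\star$ totally symmetric (true because the almost-dual structure is itself a WDVV solution in the flat coordinates of $g$), and you must quote that $\nabla^\star$ is precisely the Levi-Civita connection of the intersection form. What the paper's computation buys in exchange is the explicit constant $\mu_\kappa+\tfrac12$, which feeds directly into the subsequent proposition characterizing when $\hat\partial$ is $\nabla^\star$-flat; your argument does not produce that refinement.
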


\begin{proof} We begin by noting that $\partial\circ X = {\hat\partial} \star X\,.$ We also take $\displaystyle{\partial = \frac{\partial~}{\partial t^\kappa}}$ for some $\kappa\,,$ where the $\{t^\alpha\}$ are the flat coordinates of the metric $\eta\,.$

\begin{itemize}
\item[(a)] By definition
\begin{eqnarray*}
{\tilde g}(X,Y) & = & {\tilde \eta}(E^{-1}\circ X,Y) \,,\\
&=& \eta(\partial\circ E^{-1}\circ X,\partial\circ Y)\,,\\
&=&g(\partial\circ X,\partial\circ Y)\,,\\
&=& g({\hat\partial}\star X, {\hat\partial}\star Y)\,.
\end{eqnarray*}

\item[(b)] Starting with the connection at the lower right hand corner and unpacking definitions yields
\begin{eqnarray*}
{\tilde\nabla}^\star_XY &=& E\circ{\tilde\nabla}_X(E^{-1}\circ Y) - ({\tilde\nabla}_{E^{-1}}\circ Y)\circ X + \frac{1}{2} (3-{\tilde d}) E^{-1}\circ X\circ Y\,,\\
&=&E\circ\left\{
\partial^{-1} \circ \nabla_X(\partial \circ E^{-1} \circ Y) \right\}-
\left\{\partial^{-1} \circ \nabla_{E^{-1}\circ Y} (\partial\circ E)\right\}\circ X \\
&&+ \frac{1}{2} (3-{\tilde d}) E^{-1}\circ X\circ Y\,,\\
&=&\partial^{-1}\circ \nabla^\star_X(\partial\circ Y) + \partial^{-1} \circ\left\{ \nabla_{E^{-1}\circ \partial\circ Y}E - \nabla_{E^{-1}\circ Y}(\partial\circ E)\right\}\circ X\\
&&+ \frac{1}{2} (d-{\tilde d}) E^{-1}\circ X\circ Y\,.
\end{eqnarray*}
On using the following formula derived in \cite{DS}
\begin{eqnarray*}
\nabla_{E^{-1}\circ \partial\circ Y}E - \nabla_{E^{-1}\circ Y}(\partial\circ E) &=& \left\{[\partial,E]+[E,e]\circ\partial\right\}\circ Y\circ E^{-1}\,,\\
&=& (d_\kappa-1)\,\partial\circ Y\circ E^{-1}\,
\end{eqnarray*}
(this last line uses standard definitions: $E=\sum (d_i t^i+r_i)\partial_i\,,e=\partial_1\,,$ etc.), hence
\[
{\tilde\nabla}^\star_XY = \partial^{-1} \circ \nabla^\star_X(\partial \circ Y) + \left\{ d_\kappa-1+\frac{1}{2}(d-{\tilde d})\right\}\,E^{-1}\circ X\circ Y\,.
\]
As derived in \cite{Du1}, a Legendre transformation changes the spectrum of the Frobenius manifold, so ${\tilde d} = - 2 \mu_\kappa\,,d=-2\mu_1$ and since $d_\kappa=1-(\mu_\kappa-\mu_1)$ the result follows.

\item[(c)] Finally, as will be proved in the next proposition,
\[
\nabla_X^\star {\hat\partial} = \left( \mu_\kappa+\frac{1}{2} \right) \partial\circ X\,.
\]
With this,
\begin{eqnarray*}
Y \star\nabla_X^\star {\hat\partial} & = & \left( \mu_\kappa+\frac{1}{2} \right) \partial\circ E^{-1} \circ X\circ Y\,,\\
&=&X \star\nabla_Y^\star {\hat\partial}\,.
\end{eqnarray*}
Hence the twisted Legendre field $\hat\partial= E\circ\partial$ is a Legendre field for the connection $\nabla^\star\,.$

\end{itemize}

\end{proof}

It is important to note that, while $\partial$ is flat for $\nabla\,,$ the corresponding twisted Legendre field $\hat\partial$ is not, in general, flat for the connection
$\nabla^\star\,.$ It depends on the spectrum of the original Frobenius manifold.

\begin{prop} The twisted Legendre field $\hat\partial$ satisfies the equation
\[
\nabla_X^\star {\hat\partial} = \left( \mu_\kappa+\frac{1}{2} \right) \partial\circ X\,.
\]
In particular it is flat for $\nabla^\star$ if and only if $\mu_\kappa=-\frac{1}{2}\,.$
\end{prop}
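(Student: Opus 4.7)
The plan is to substitute $Y=\hat\partial=E\circ\partial$ directly into the connection formula
\[
\nabla^\star_XY = E\circ\nabla_X(E^{-1}\circ Y) - (\nabla_{E^{-1}\circ Y}E)\circ X + \tfrac{1}{2}(3-d)E^{-1}\circ X\circ Y
\]
recalled in the previous section, and to exploit the flatness of $\partial$ with respect to $\nabla$ to kill the first term. Since $E^{-1}\circ\hat\partial=\partial$, the leading term becomes $E\circ\nabla_X\partial=0$. The last term collapses immediately using $E^{-1}\circ E=e$ to give $\tfrac{1}{2}(3-d)\,X\circ\partial$. So the entire content of the proposition lies in the middle term $(\nabla_\partial E)\circ X$.

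Next, I would compute $\nabla_\partial E$ in the flat coordinates $\{t^\alpha\}$ of $\eta$. With $\partial=\partial/\partial t^\kappa$ and the standard Dubrovin normal form $E=\sum_i(d_it^i+r_i)\partial_{t^i}$, one reads off $\nabla_\partial E=d_\kappa\,\partial$. Combining the three terms gives
\[
\nabla^\star_X\hat\partial=\left[\tfrac{1}{2}(3-d)-d_\kappa\right]\partial\circ X.
\]
It then remains to identify the scalar coefficient with $\mu_\kappa+\tfrac{1}{2}$. Using the spectrum relations $d=-2\mu_1$ and $d_\kappa=1-(\mu_\kappa-\mu_1)$ (the same ones invoked at the end of the proof of the previous theorem), one obtains $\tfrac{1}{2}(3-d)-d_\kappa=\tfrac{3}{2}+\mu_1-1+\mu_\kappa-\mu_1=\mu_\kappa+\tfrac{1}{2}$, as required.

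For the final ``if and only if'' statement, the implication $\mu_\kappa=-\tfrac{1}{2}\Rightarrow\nabla^\star\hat\partial=0$ is immediate. For the converse, specialising to $X=e$ yields $\nabla^\star_e\hat\partial=(\mu_\kappa+\tfrac{1}{2})\,\partial$, and since $\partial$ is invertible (as part of the Legendre field hypothesis) it is nonzero, forcing $\mu_\kappa=-\tfrac{1}{2}$.

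The only real obstacle is bookkeeping. One has to be careful with the sign and placement of $E$ versus $E^{-1}$ in the almost-duality formula, and with matching the two different spectrum conventions (the $d_i$'s of the Euler field versus the $\mu_i$'s normalised so that $\mathcal{L}_E\eta=(2-d)\eta$). Everything else is a direct computation.
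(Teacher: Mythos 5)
Your proposal is correct and follows essentially the same route as the paper: substitute $Y=\hat\partial=E\circ\partial$ into the almost-duality connection formula, kill the first term via $\nabla\partial=0$, evaluate $\nabla_\partial E=d_\kappa\,\partial$ (the paper phrases this via $[E,\partial]=-d_\kappa\partial$ using torsion-freeness, which is the same computation), and convert $\tfrac{1}{2}(3-d)-d_\kappa$ to $\mu_\kappa+\tfrac{1}{2}$ with the spectrum relations. Your explicit argument for the converse of the ``if and only if'' (setting $X=e$ and using invertibility of $\partial$) is a harmless small addition.
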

\begin{proof}
By direct calculation
\begin{eqnarray*}
\nabla_X^\star{\hat\partial} & = & E\circ \nabla_X (E^{-1}\circ {\hat\partial} - (\nabla_{E^{-1}\circ {\hat\partial}}E)\circ X + \frac{1}{2}(3-d) E^{-1}\circ {\hat\partial}\circ E^{-1}\,,\\
&=& E \circ \nabla_X \partial - (\nabla_\partial E)\circ X + \frac{1}{2} (3-d) X\circ \partial\,.
\end{eqnarray*}
Since by definition $\delta$ is flat for the torsion free connection $\nabla\,,$
\[
\nabla_X^\star{\hat\partial} = \left\{ [E,\partial] + \frac{1}{2} (3-d) \partial \right\}\circ X\,.
\]
But $[E,\partial]=-d_\kappa\partial$ (since $E=\sum (d_i t^i+r_i)\partial_i$ and $\partial=\partial_\kappa$) and since $d_\kappa=1-(\mu_\kappa-\mu_1)$ and $d=-2\mu_1\,,$ the result follows.
\end{proof}

\begin{rem}
The inverse twisted Legendre field, defined by the equation ${\hat\partial}^{-1} \star {\hat\partial} = E\,,$  is also a twisted Legendre field, in particular,
${\hat\partial}^{-1} = E \circ \partial^{-1}\,.$ From the formulae above, this 
satisfies the equation
\[
{\tilde\nabla}_X^\star {\hat\partial}^{-1} = \hat{\partial}^{-1} \star \nabla_X^\star E\,.
\]
Since, for an almost-dual Frobenius manifold $\nabla_X^\star E\neq 0$ in general, the inverse twisted Legendre field will not be, in general, flat, even if $\hat{\partial}$ is flat for $\nabla^\star\,.$ 
\end{rem}

\begin{ex}\cite{RS}
Starting with the prepotential $F=\frac{1}{2} t_1^2 t_2+e^{t_2}$ we can perform a Legendre transformation with $\partial=\partial_{t_2}$ to obtain a prepotential $\tilde{F}$ together with almost-duality transformations for these two prepotentials to obtain a commuting diagram

\[
\begin{array}{ccc}
F& \overset{\partial}{\longrightarrow}&{\tilde F} \\
%& & \\
\downarrow & &\downarrow  \\
%& & \\
F^\star & \overset{\hat{\partial}} {\longrightarrow}&
{\tilde F}^\star\end{array}
\]
\end{ex}
where
\begin{eqnarray*}
{\tilde F} & = &\frac{1}{2} {\tilde t}_2^2 {\tilde t_1} + {\tilde t}_1^2 \log{\tilde t}_1\,,\\
F^\star & = & \frac{1}{24} z_1^3 - \frac{1}{8} z_1 z_2^2 +\frac{1}{2}\left\{Li_3(e^{z_2}) + Li_3(e^{-z_2})\right\}\,,\\
{\tilde F}^\star & = & \frac{1}{4}\left\{ {\tilde z}_1^2 \log {\tilde z}_1 + {\tilde z}^2_2 \log {\tilde z}_2 - ({\tilde z}_1-{\tilde z}_2)^2 \log({\tilde z}_1-{\tilde z}_2)\right\}
\end{eqnarray*}
and $Li_3(w)$ is the tri-logarithm function.

\noindent We note in passing that the twisted Legendre transformation, in this example, maps trigonometric $\bigvee$-systems \cite{F} to rational $\bigvee$-systems \cite{V}. We end this section by giving a more substantial example.

\begin{ex}
The extended $A_2$ Frobenius manifold found in \cite{DZ} has prepotential
\[
F=\frac{1}{2}t_1^2t_3+\frac{1}{4}t_1t_2^2+t_2e^{t_3}-\frac{1}{96}t_2^4,
\]
and Euler vector field
\[
E=t_1\partial_1+\frac{1}{2}t_2\partial_2+\frac{3}{2}\partial_3.
\]
The flat coordinates, $z_i$, of the intersection form are given, implicitly, by
\begin{eqnarray*}
t_1&=&e^{\frac{2}{3}z_3}(e^{z_1}+e^{-z_2}+e^{z_2-z_1}),\\
t_2&=&e^{\frac{1}{3}z_3}(e^{z_2}+e^{-z_1}+e^{z_1-z_2}),\\
t_3&=&z_3.
\end{eqnarray*}
We can thus use (\ref{symmetry}) to find the almost dual prepotential:
\begin{multline*}
F^*=\frac{1}{2}\left[Li_3(e^{2z_2-z_1})+Li_3(e^{z_1-2z_2})\right]+\frac{1}{2}\left[Li_3(e^{z_2-2z_1})+Li_3(e^{2z_1-z_2})\right]\\ +\frac{1}{2}\left[Li_3(e^{z_1+z_2})+Li_3(e^{-z_1-z_2})\right]+\frac{1}{4}z_1z_2(z_1-z_2)-\frac{2}{3}(z_1^2-z_1z_2+z_2^2) +\frac{2}{27}z_3^3,
\end{multline*}
and can perform generalised Legendre transformations generated by the vector fields
\[
\hat\partial_{t_i}=E\circ\partial_{t_i} \quad i=2,3.
\]
Calculating these vector fields gives
\begin{eqnarray*}
\hat\partial_2&=&\frac{3}{2}e^{t_3}\partial_1+\left(t_1-\frac{t_2^2}{4}\right)\partial_2+\frac{1}{4}t_2\partial_3,\\
\hat\partial_3&=&2t_2e^{t_3}\partial_1+3e^{t_3}\partial_2+t_1\partial_3,
\end{eqnarray*}
or in the z-coordinates
\begin{multline*}
\hat\partial_2=\frac{e^{\frac{z_3}{3}}}{2}\left[\frac{1}{3}\left(-e^{z_1-z_2}+2e^{-z_1}-e^{z_2}\right)\partial_{z_1}\right.
+ \frac{1}{3}\left(e^{z_1-z_2}+e^{-z_1}-2e^{z_2}\right)\partial_{z_2}+\\ \left.\frac{1}{2}\left(e^{z_1-z_2}+e^{-z_1}+e^{z_2}\right)\partial_{z_3}\right]
\end{multline*}
\begin{multline*}
\hat\partial_3=e^{\frac{2z_3}{3}}\left[\frac{1}{3}\left(e^{-z_2}+e^{z_2-z_1}-2e^{z_1}\right)\partial_{z_1}\right.
+ \frac{1}{3}\left(2e^{-z_2}-e^{z_2-z_1}-e^{z_1}\right)\partial_{z_2}+ \\ \left.\left(e^{-z_2}+e^{z_2-z_1}+e^{z_1}\right)\partial_{z_3}\right].
\end{multline*}
We can now perform the transformations as outlined in Proposition \ref{symmetry} for different $\kappa\,.$

\noindent $\bullet$ {Transformation $\hat\partial_2\,:$}
From
\[
\frac{\partial\hat z_\alpha}{\partial z^\beta}=\partial^\gamma c_{\alpha\beta\gamma},
\]
we obtain new coordinates
\begin{eqnarray*}
\hat z_1&=&e^{\frac{1}{3}z_3}(e^{z_1-z_2}-2e^{-z_1}+e^{z_2}),\\
\hat z_2&=&-e^{\frac{1}{3}z_3}(e^{z_1-z_2}+e^{-z_1}-2e^{z_2}),\\
\hat z_3&=&-\frac{3}{2}e^{\frac{1}{3}z_3}(e^{z_1-z_2}+e^{-z_1}+e^{z_2}).
\end{eqnarray*}
which can be inverted
\begin{eqnarray*}
z_1&=&\text{\emph{log}}\left(\frac{(3\hat z_2-2\hat z_3)(3\hat z_1-3\hat z_2-2\hat z_3)}{(3\hat z_1+2\hat z_3)^2}\right)^{\frac{1}{3}},\\
z_2&=&\text{\emph{log}}\left(\frac{-(3\hat z_2-2\hat z_3)^2}{(3\hat z_1+2\hat z_3)(3\hat z_1-3\hat z_2-2\hat z_3)}\right)^{\frac{1}{3}},\\
z_3&=&\text{\emph{log}}\left(-\frac{1}{9^3}(3\hat z_2-2\hat z_3)(3\hat z_1-3\hat z_2-2\hat z_3)(3\hat z_1+2\hat z_3)\right).
\end{eqnarray*}
Then from
\[
\frac{\partial^2\hat F^*}{\partial \hat z_a\partial \hat z_b}=\frac{\partial^2F^*}{\partial  z_a\partial  z_b}
\]
we derive the prepotential
\[
\hat F^*_{(2)}=\sum\limits_{\alpha\in\mathcal{R}}(\alpha\cdot z)^2\text{\emph{log}}(\alpha\cdot z)+ \sum\limits_{\beta\in\mathcal{W}_2}(\beta\cdot z)^2\text{\emph{log}}(\beta\cdot z)
\]
where (after the linear transformation $z_2\rightarrow -z_2$)
\[
    \mathcal{R}=
\begin{cases}
    (1,2,0)& \\
    (1,-1,0)&, \\
    (2,1,0)&
\end{cases}
\]
 the roots of the $A_2$ system and
\[
    \mathcal{W}_2=
\begin{cases}
   \pm (1,0,2/3)& \\
    \pm(0,1,2/3)& .\\
    \pm(-1,-1,2/3)&
\end{cases}
\]

\noindent$\bullet$ {{Transformation $\hat\partial_3\,:$}}
Similarly we find
\begin{eqnarray*}
\hat z_1&=&e^{\frac{2}{3}z_3}(e^{z_2-z_1}-2e^{z_1}+e^{-z_2}),\\
\hat z_2&=&e^{\frac{2}{3}z_3}(e^{z_2-z_1}+e^{z_1}-2e^{-z_2}),\\
\hat z_3&=&-3e^{\frac{2}{3}z_3}(e^{z_2-z_1}+e^{z_1}+e^{-z_2}),
\end{eqnarray*}
inverting,
\begin{eqnarray*}
z_1&=&\text{\emph{log}}\left(\frac{(3\hat z_1-\hat z_3)^2}{(3\hat z_2+2\hat z_3)(3\hat z_1-3\hat z_2+\hat z_3)}\right)^{\frac{1}{3}},\\
z_2&=&\text{\emph{log}}\left(\frac{-(3\hat z_1-\hat z_3)(3\hat z_1-3\hat z_2+\hat z_3)}{(3\hat z_1+\hat z_3)^2}\right)^{\frac{1}{3}},\\
z_3&=&\text{\emph{log}}\left(-\frac{1}{9^3}(3\hat z_1-\hat z_3)(3\hat z_1-3\hat z_2+\hat z_3)(3\hat z_1+\hat z_3)\right)^{\frac{1}{2}}.
\end{eqnarray*}
and the prepotential is
\[
\hat F^*_{(3)}=\sum\limits_{\alpha\in\mathcal{R}}(\alpha\cdot z)^2\text{\emph{log}}(\alpha\cdot z)-2 \sum\limits_{\beta\in\mathcal{W}_3}(\beta\cdot z)^2\text{\emph{log}}(\beta\cdot z)
\]
where $\mathcal{R}$ is as above but now
\[
    \mathcal{W}_3=
\begin{cases}
    \pm(1,0,-1/3)& \\
    \pm(0,1,-1/3)& .\\
    \pm(-1,-1,-1/3)&
\end{cases}
\]

We can also perform the standard Legendre transformations $S_2$ (generated by $\partial_{t^2}$) and $S_3$ (generated by $\partial_{t^3}$) to obtain, respectively, the prepotentials
\begin{eqnarray*}
\hat F_{(2)}&=&\frac{1}{12}\hat t_2^3+\hat t_1\hat t_2\hat t_3+\frac{1}{2}\hat t_1^2\text{\emph{log}}\hat t_1+\frac{1}{3}\hat t_1\hat t_3^3,\\
\hat F_{(3)}&=&\frac{1}{4}\hat t_2^2\hat t_3+\frac{1}{2}\hat t_1\hat t_3^2+\frac{1}{2}\hat t_1^2\text{\emph{log}}\hat t_2.
\end{eqnarray*}
Schematically, then, we have
\[
\begin{array}{ccccc}
\hat F_{(3)}&\overset{S_3}{\longleftarrow}&F& \overset{S_2}{\longrightarrow}&{\hat F_{(2)}}\\
%& & \\
\downarrow & & \downarrow & & \downarrow \\
%& & \\
\hat F^*_{(3)}&\overset{\hat S_3}{\longleftarrow}&F^*& \overset{\hat S_2}{\longrightarrow}&{\hat F^*_{(2)}}
\end{array}
\]
and one may check, by direct calculation, that this diagram commutes.

The geometry of such configurations - the vectors in $\mathcal{R}\cup \mathcal{W}_i$ - are examples of extended $\bigvee$-systems and are studied further in \cite{Sted,SS}.

\end{ex}

\section{The semi-simple case}

It has already been stated that, geometrically, the tensor $\circ$ does not change under a generalized Legendre transformation. What does change are the sets of flat coordinate systems which one uses to find a coordinate representation of this tensor, and hence solutions of the WDVV equations. This invariance actually goes deeper, and this is best seen in the case of a semi-simple multiplication.

At a generic point of a semi-simple manifold, a basis of vector fields, and coordinates $\{u^i\}$, may be found so the multiplication is diagonal,
\[
\frac{\partial~}{\partial u^i} \circ \frac{\partial~}{\partial u^j} = \delta_{ij} \,\frac{\partial~}{\partial u^i}
\]
and with this, the Frobenius condition $\eta(X\circ Y,Z) = \eta(X,Y\circ Z)$ implies that the metric is diagonal, so may be written in the form
\[
\eta\left(
\frac{\partial~}{\partial u^i} , \frac{\partial~}{\partial u^j}
\right) = H_i(u)^2 \delta_{ij}\,,
\]
where the $H_i$ are the Lam\'e coefficients. On writing the generalized Legendre field in the form
\[
\delta = \sum_{i=1}^{dim{\mathcal{M}}} \frac{\psi_i}{H_i} \frac{\partial~}{\partial u^i}
\]
it easily follows that the new metric takes the form
\[
{\tilde\eta}\left(
\frac{\partial~}{\partial u^i} , \frac{\partial~}{\partial u^j}
\right) = \psi_i(u)^2 \delta_{ij}\,.
\]
The Legendre field definition, when written in this coordinate system, reduce to the equation
\[
\frac{\partial~}{\partial u^i} \psi_j = \beta_{ij} \psi_i\,,\qquad i \neq j\,,
\]
where $\beta_{ij}$ are the rotation coefficients (assumed to be symmetric) of the original metric, $\beta_{ij} = \frac{\partial_i H_j}{H_i}\,.$ Re-expressing this shows that the Legendre field condition reduces to the statement that the two diagonal metrics $\eta$ and $\tilde{\eta}$ have the same (off-diagonal) rotation coefficients, and hence flatness will be preserved. But this construction does not imply, as is required in the Frobenius manifold case \cite{Du1}, the condition
$\sum_i\partial_i \psi_j=0$ so one obtains a wider class of flat coordinate systems and solutions of the WDVV equations than in the original construction (where the constraint $\sum_i\partial_i \psi_j=0$ is required to preserve the full Frobenius manifold structure).

\section{Comments}

The Legendre fields - solutions of equation (\ref{Legendre}) - also appear in the theory of hydrodynamics systems associated to $F$-manifolds \cite{LPR}. They appeared there as generators of commuting flows
\[
u_t = \partial \circ u_X
\]
which generalizes the principal hierarchy defined by Dubrovin in the case of Frobenius manifolds \cite{Du1}. The role of $\partial$ in this paper is somewhat different: here it plays the role in defining a symmetry between {\sl different} principal hierarchies. But the use of conservation laws to define new sets of variables for hydrodynamic systems is well established and the theory developed here may be seen as a generalization of this idea (in the sense that the hierarchies as defined above have an underlying connection, and hence one can use $\partial$ to provide a map between such connections). Further, since the Legendre condition comes from the preservation of the torsion-free condition, the theory could be developed to more general situations where one has torsion free, but not metric, connections \cite{AL}.

One unexpected feature of the special case of twisted-Legendre fields (for almost-dual structures) is that they map rational solutions to the WDVV equations to trigonometric solutions, or more specifically, for those almost-dual solutions coming from classical extended-affine-Weyl Frobenius manifolds \cite{DSZZ,SS}.
Since the restriction of Legendre fields to discriminant submanifolds are again Legendre, one should be able to derive similar results for the induced WDVV equations on discriminant submanifolds (for which the induced intersection form is flat), following the ideas in \cite{FV,S}. More generally, one needs to see if one can apply these ideas directly to  rational/trigonometric $\bigvee$-systems.

\section*{Acknowledgements}
Richard Stedman would like to thank the EPSRC for PhD funding (Doctoral Training Grant EP/K503058/1). Ian Strachan would like to thank Liana David for the many conversations and collaborations that led to this paper.

%%%%%%%%%%%%%%%%%%%%%%%%%%%%%
%\newpage

\end{document}